\definecolor{Blue}{rgb}{0.3,0.3,0.9}
\newcommand{\df}{\stackrel{{\tiny \textit{df}}}{=}}
\newcommand{\st}{\;\cdot\;}
\newtheorem{definition}{Definition}
\newtheorem{theorem}{Theorem}
\newtheorem{lemma}{Lemma}
\newtheorem{corollary}{Corollary}
\newcommand{\conflictsymbol}{\maltese}
\newcommand{\localconflicts}{\;\conflictsymbol\;}
\newcommand{\tcaconflictfree}{\textit{conflict-free}}
\newcommand*{\textol}[1]{$\overline{\hbox{#1}}\m@th$}
\newcommand{\Tau}{\mathrm{T}}
\newcommand{\longsquiggly}{\xymatrix{{}\ar@{~>}[r]&{}}}
\newcommand{\longto}{\xymatrix{{}\ar@{->}[r]&{}}}
\newcommand{\seq}{\textit{seq}}
\newcommand{\dom}{\textrm{dom}}
\newcommand{\booltype}{\mathbb{B}}
\newcommand{\timetype}{\mathbb{T}}
\newcommand{\partiestype}{\mathbb{P}}
\newcommand{\normstype}{\mathbb{D}}
\newcommand{\actionstype}{\mathbb{A}}
\newcommand{\attemptedactionstype}{\actionstype_{\textit{attempted}}}
\newcommand{\clockstype}{\mathbb{C}}
\newcommand{\configurationtype}{\textit{Conf}}
\newcommand{\attempted}[1]{#1_{\textit{attempt}}}
\newcommand{\shift}[2]{#1 \gg #2}
\newcommand{\exceeds}[2]{#1 > \max(#2)}
\newcommand{\timednorm}[4]{{\cal #1}_{#3}(#2:#4)}
\newcommand{\timednormeph}[4]{{\cal #1}^{\textit{eph}}_{#3}(#2:#4)}
\newcommand{\timednormper}[4]{{\cal #1}^{\textit{per}}_{#3}(#2:#4)}
\newcommand{\violated}[2]{\textit{vio}(#1,\;#2)}
\newcommand{\satisfied}[2]{\textit{sat}(#1,\;#2)}
\newcommand{\activenorms}{\textit{active}}
\newcommand{\transition}[3]{\xrightarrow{#1 \;\mid\; #2 \;\mapsto\; #3}}
\newcommand{\clocksvaluations}{\textit{val}_{\clockstype}}
\newcommand{\clockspredicates}{\textit{pred}_{\clockstype}}
\newcommand{\tc}{\textit{tc}}
\newcommand{\Ns}{\textit{Ns}}
\newcommand{\activekw}{\textit{active}}
\newcommand{\timedtracetype}{\textit{TimedTrace}}
\newcommand{\pers}{\textit{pers}}
\newcommand{\eph}{\textit{eph}}
\newcommand{\reachable}{\textit{reachable}}
\begin{document}

\begin{frontmatter}              %
\title{Sound Conflict Analysis for Timed Contract Automata}
\author[A]{\fnms{Shaun} \snm{Azzopardi}} %
and
\author[B]{\fnms{Gordon J.} \snm{Pace}}

\runningauthor{S. Azzopardi et al.}
\address[A]{Independent Researcher}
\address[B]{University of Malta, Malta}

\begin{abstract}
One can find various temporal deontic logics in literature, most focusing on discrete time. The literature on real-time constraints and deontic norms is much sparser. Thus, many analysis techniques which have been developed for deontic logics have not been considered for continuous time. In this paper we focus on the notion of conflict analysis which has been extensively studied for discrete time deontic logics. We present a sound, but not complete algorithm for detecting conflicts in timed contract automata and prove the correctness of the algorithm, illustrating the analysis on a case study.
\end{abstract}

\begin{keyword}
Deontic Logic\sep Timed Contract Automata\sep Conflict Analysis
\end{keyword}
\end{frontmatter}

\section{Introduction}\label{s:intro}
In the literature, one can find various variations of temporal deontic logics~\cite{governatori2007characterising}, however, much of it is limited to discrete model of time. Although one may argue that, with sufficiently fine temporal granularity, discrete time logics suffice for most intents and purposes, specifications and legal clauses may include real-time notions, which may not be easily mappable to a discrete setting.

We find some deontic logics able to deal with real-time in literature. Governatori et al. have presented various partial formalisations of normative specifications with time e.g.~\cite{governatori2007characterising,governatori2011justice,governatori2005temporalised}, using \textit{defeasible logic}. Another approach is that used in C-O Diagrams~\cite{martinez:10,MCD+11tas,martinez:11,martinez:13,DCMS14svnt} giving a formal visual representation of normative systems. C-O Diagrams have been given a semantics using timed automata~\cite{Alur1994} semantics, enabling dealing with real-time directly. In contrast with the visual approach used in C-O diagrams, Themulus~\cite{themulus,cambronero2017calculus,CLP17jurix} is a timed contract-calculus including the deontic notions of permissions, prohibitions and obligations as well as reparations in case of violations, and uses a bisimulation-based approach to enable contract comparison. Kharraz et al.~\cite{KLS21tddl} developed a timed version of a dyadic deontic logic with conditional obligations, permissions, and obligations, and with a reparation operator for representing contrary-to-duties and contrary-to-prohibitions. More recently, timed contract automata~\cite{DBLP:conf/jurix/ChircopPS22} were presented as a real-time extension of contract automata~\cite{DBLP:journals/ail/AzzopardiPSS16}. These extend contract automata with clocks, inspired by \emph{timed automata}, essentially resulting in a combination of timed and contract automata handling the temporal and deontic aspects respectively. 

The concept of deontic conflicts has long been studied in the literature~\cite{FPS09acd}. For instance, an agreement which (i) obliges a person who holds a resource to release it when another party requests it, and (ii) prohibits releasing the resource halfway through a high-priority transaction, would result in a conflict when a party is halfway through a transaction and another party requests it. The notion of conflict goes beyond prohibition vs. obligation, but also covers permissions~\cite{FPS09clan}, mutually exclusive actions~\cite{FPS09clan,DBLP:journals/ail/AzzopardiPSS16} and general environmental constraints~\cite{DBLP:conf/jurix/Pace20}. However, the notion of conflict in the context of real-time deontic logics has not, to the best of our knowledge, been explored in the literature. Real-time introduces new challenges to conflict analysis. For instance, we can extend the example above to cover real time: (i) a person who holds a resource is obliged to release it within 15 minutes of another party requesting it, and (ii) it is prohibited to release a resource halfway through a high-priority transaction. Note that, adding a third clause which states that (iii) high-priority transactions may not be started while a request is pending, the conflict would be resolved.

In this paper, we explore conflict analysis in the context of real-time deontic logics. We focus on timed contract automata, where the main issue with conflict analysis lies in the presence of norms that outlive the explicit automaton state they are triggered from. We give an automata transformation that massages the automaton into a form more easily amenable to analysis. Based on this, we further describe an analysis technique to discover conflicts. This technique is sound, but not complete. We also discuss how conflict analysis can be made complete in this context, at a further complexity price. Even a sound, but not complete conflict analysis becomes non-trivial in the real-time case, and this paper is dedicated primarily to prove the correctness of our algorithm.\footnote{For the sake of conciseness, we do not cover timeouts in this paper, but we believe that the approach we have taken can be readily extended to deal with them.}

The paper is organised as follows. In Section~\ref{s:bg} we present timed contract automata and their semantics from~\cite{DBLP:conf/jurix/ChircopPS22}. We then define the notion of conflicts for these automata in Section~\ref{s:confl}, and an algorithm to transform the automata into a form more amenable for analysis in Section~\ref{s:eph}. We then present the conflict analysis algorithm and prove its correctness in Section~\ref{s:ca}. The algorithm is illustrated through a case study in Section~\ref{s:casestudy}, and we conclude in Section~\ref{s:conclusions}.

\section{Background}\label{s:bg}

Timed contract automata combine timed automata with contract automata, with states labelled by real-time norms, transitions guarded by real-time events and real-time constraints. They allow for two types of norms: (i) persistent norms, which persist even after the state they appear in is left; and (ii) ephemeral norms, which are discarded if the state is exited. In this section we briefly present the syntax and semantics of timed contract automata, but for more details, the reader is referred to~\cite{DBLP:conf/jurix/ChircopPS22}. 

In this paper we will denote the continuous time domain as $\timetype$, and the type of clocks as $\clockstype$, all of which will run at the same rate. The existence of a global clock $\gamma \in \clockstype$ is assumed. A \emph{clock valuation}, of type $\clocksvaluations \df \clockstype \rightarrow \timetype$, gives a snapshot of the values carried by the clocks. Clock predicates ranging over $\clockspredicates \df \clocksvaluations \rightarrow \booltype$ take a clock valuation and return whether the predicate is satisfied. The notation $\shift{v}{\delta}$ is used to denote the advancement of clock values in $v$ by $\delta$. A valuation $v$ is said to exceed the latest satisfaction of a clock predicate $\tau$, written $\exceeds{v}{\tau}$, if for any non-negative progress in time $\delta\in\timetype$, the predicate is not satisfied $\lnot(\tau(v+\delta))$. The expression $v \oplus v'$ denotes the overriding of a clock valuation $v$ by $v'$ i.e. the clock valuation which returns the value given by $v'$ when defined, or that given by $v$ otherwise.

Given a set of actions $\actionstype$, timed automata also identify attempted actions $\attemptedactionstype$, denoting the enriched alphabet $\actionstype\cup\{ \attempted{a} \mid a\in\actionstype\}$. A \emph{timed trace} over a set of parties $\partiestype$ and alphabet $\actionstype$ is a finite sequence of observed events: an action with associated party and timestamp (as per the global clock): $\timedtracetype \df \seq(\partiestype \times \attemptedactionstype \times \timetype)$ with increasing timestamps.

\noindent\textbf{Definition.}
A \emph{timed contract automaton} $C$, over parties $\partiestype$, actions $\actionstype$ and that uses clocks $\clockstype$, is a tuple $\langle Q,\;q_0,\;\rightarrow,\;\textit{pers},\;\textit{eph}\rangle$ where: 
(i) $Q$ is the set of states, with $q_0\in Q$ being the initial state; 
(ii) $\rightarrow \subseteq Q \times (\partiestype \times \actionstype \times \clockspredicates \times \clocksvaluations) \times Q$ is the transition relation labelling each transition with a party and action which trigger it, a clock predicate which guards it, and a (possibly partial) clock valuation to reset any number of clocks upon taking the transition; and
(iii) $\textit{pers},\;\textit{eph}\in Q \rightarrow 2^{\normstype}$ are functions, which given a state, return the sets of persistent and ephemeral norms active when in that state. 
We will write $q \transition{p:a}{\tau}{\rho} q'$ to denote $(q,\;(p,\;a,\;\tau,\;\rho),\;q') \in \rightarrow$.

A timed contract automaton is \emph{well-formed} if (i) the global clock is never reset, i.e., if $q \transition{p:a}{\tau}{\rho} q'$, then $\gamma\notin \dom(\rho)$; and (ii) the automaton is \emph{deterministic}, i.e., an observed action only allows for one transition to fire: if $q \transition{p:a}{\tau_1}{\rho_1} q_1$ and $q \transition{p:a}{\tau_2}{\rho_2} q_2$, then either $q_1=q_2$ and $\rho_1=\rho_2$, or for any clocks valuation $v$, $\lnot (\tau_1(v)\land \tau_2(v))$. In the rest of the paper we will assume that timed contract automata are well-formed. 

We will refer to syntactic reachability of a state $q$ using the predicate $\reachable(q)$ defined as there existing a path from the initial state to $q$ in the automaton.

The semantics of timed automata are given as a transition system over timed configurations as defined below:

\noindent\textbf{Definition.} A \emph{configuration} of a timed contract automaton $M=\langle Q,\;q_0,\;\rightarrow,\;\textit{pers},\;\textit{eph}\rangle$ has type: $Q\times\clocksvaluations\times\normstype\times\normstype$. We write $\configurationtype_M$ to denote the set of all configurations, leaving out $M$ when clear from the context. The initial configuration $\textit{conf}_0$ is $(q_0,\lambda c \cdot 0, \textit{pers}(q_0),\;\textit{eph}(q_0))$.

Temporal progression of configurations upon consuming an event $(p,a,t)$ is defined using the configuration relation $\textit{conf} \xRightarrow{p:a,\;t} \textit{conf}\:'$ showing how a configuration evolves, breaking it down into (i) a temporal step $\textit{conf} \xRightarrow[\textit{temp}]{p:a,\;t} \textit{conf}\:'$; and (ii) a deontic step $\textit{conf} \xRightarrow[\textit{norm}]{p:a,\;t} \textit{conf}\:'$.

\noindent\textbf{Temporal semantics:} The temporal behaviour of a timed contract automaton is characterised by two rules. The first rule (handling a change of automaton state)  triggers when enabled, and the second (handling events leaving the automaton in the same state) is triggered if the first is not. The rule handling a change of state in the configuration is the following:

{\scriptsize
\begin{mathpar}
\inferrule*[right={$\delta = t-v(\gamma),\;\tau(\shift{v}{\delta})$}]
{ q \transition{p:a}{\tau}{\rho} q' }
{ (q,\;v,\;P,\;E) 
    \xRightarrow[\textit{temp}]{p:a,\;t} 
        (q',\;(\shift{v}{\delta}) \oplus \rho,\;P \cup \textit{pers}(q'),\;\textit{eph}(q'))
}
\end{mathpar}
}\normalsize

\noindent If the rule above does not match, the configuration remains in the same state:
\scriptsize{
\begin{mathpar}
\inferrule*[right={$\delta = t-v(\gamma)$}]
{ }
{ (q,\;v,\;P,\;E) 
    \xRightarrow[\textit{temp}]{p:a,\;t}
        (q',\;\shift{v}{\delta},\;P,\;E)
}
\end{mathpar}
}\normalsize

\noindent\textbf{Deontic semantics:} The deontic semantics of timed contract automata are based on the semantics of the individual norms as defined below in terms of characterising their violations and satisfaction: 
\scriptsize{\[\begin{array}{lclclcl}
\violated{\timednorm{P}{p}{\tau}{a}}{(p:\attempted{a},\;v)} & \df & \tau(v)
&~\qquad~& \satisfied{\timednorm{P}{p}{\tau}{a}}{(p':a',\;v)} & \df & \exceeds{v}{\tau}\\
\violated{\timednorm{F}{p}{\tau}{a}}{(p:a,\;v)} & \df & \tau(v)
&& \satisfied{\timednorm{F}{p}{\tau}{a}}{(p':a',\;v)} & \df & \exceeds{v}{\tau}\\
\violated{\timednorm{O}{p}{\tau}{a}}{(p':a',\;v)} & \df & \exceeds{v}{\tau}
&& \satisfied{\timednorm{O}{p}{\tau}{a}}{(p:a,\;v)} & \df & \tau(v)
\end{array}\]
}\normalsize

Based on these definitions, the rules for deontic transitions are given below:

\scriptsize{
\begin{mathpar}
\inferrule*[right={$\delta = t-v(\gamma)$}]
{ \exists n\in P \cup E \cdot \violated{n}{(p:a, \shift{v}{\delta})} }
{ (q,\;v,\;P,\;E) 
    \xRightarrow[\textit{norm}]{p:a,\;t} 
        \bot
}\\
\inferrule*[right={$\delta = t-v(\gamma)$}]
{ \lnot\exists n\in P \cup E \cdot \violated{n}{(p:a, \shift{v}{\delta})} }
{ (q,\;v,\;P,\;E) 
    \xRightarrow[\textit{norm}]{p:a,\;t} 
        (q,\;v,\;\activenorms(P,\;(p:a,\;\shift{v}{\delta})),\;\activenorms(E,\;(p:a,\;\shift{v}{\delta})))
}
\end{mathpar}
}\normalsize

\section{Normative Conflicts}\label{s:confl}
We can now turn to the analysis of timed contract automata for conflict discovery. We will start by characterising our notion of conflict, to enable reasoning about the correctness of the algorithm discovering them. We start with the notion of a conflict at a particular point in time. In this paper, we limit ourselves to conflicts arising from a prohibition and a permission, or a prohibition and an obligation.
\begin{definition}
    A set of norms $N$ is said to conflict at a clock valuation $v$ when there is either an obligation or a permission, and a prohibition on the same action that are both active at $v$, formally:
    \begin{align*}
        \localconflicts(N, v) \df\ & \ \ \ \exists O_{\tau}(p:a), 
        F_{\tau'}(p:a) \in N \cdot \tau(v) \wedge \tau'(v)\\
                                  & \vee \exists P_{\tau}(p:a),
                                  F_{\tau'}(p:a) \in N \cdot \tau(v) \wedge \tau'(v)
    \end{align*}
\end{definition}

This notion of conflicts corresponds to the notion of conflicts defined in the discrete time setting for contract automata \cite{DBLP:journals/ail/AzzopardiPSS16}.
Using the definition above, we can talk about the notion of conflicts arising in a timed contract automaton.

\begin{definition}\label{def:conflictfree}
A timed contract automaton $M$ is said to be \emph{conflict-free}, written $\tcaconflictfree(M)$, if all timed traces will put the automaton into a configuration with no conflict:
$$\begin{array}{l}
\forall
    \textit{ts} \in \timedtracetype,\;
    (q,\;v,\;P,\;E) \in \configurationtype_M
\st \\\quad\qquad
    \textit{conf}_0 \xRightarrow{\textit{ts}} (q,\;v,\;P,\;E) \implies
    \lnot \localconflicts(P \cup E, v)
\end{array}$$
\end{definition}

It is worth noting that we identify as a conflict any point in time in which opposing norms are in force. For example, a situation where a party is obliged to perform an action between time 0 and 10, but prohibited from doing the same action between time 0 and 5 is a conflict. A more lenient definition of conflict would have allowed this, since the obligation is still satisfiable. Both are useful notions. Here we take the former view, allowing us to capture and provide to the user more useful information about the structure their behaviour. We leave the alternative interpretation as future work.

\section{From Persistent Norms to Ephemeral Norms}\label{s:eph}

The difficulty in finding conflicts primarily lies in persistent norms. If the timed contract automata only had ephemeral norms, it would be a matter of checking for conflicts locally in the states. Persistent norms, however, outlive the state they lie in, and the key to the conflict analysis analysis we present is to transform persistent norms into ephemeral ones. In the process we also complete the automaton (such that the do-nothing semantic rule is never triggered). We first define some useful functions.

Our reduction of persistent norms to ephemeral norms is exponential in the number of norms used. This is since we consider (almost) all the possible subsets of a set of norms that can be satisfied at some point in time. Given such a subset, we replicate transitions in the original automaton using the conjunction of their condition with the timing conditions during which the norms may be satisfied, and only of those norms.

\paragraph{Abstracting active norms:} Recall the \textit{active} function, which given a set of norms $N$, an action $p:a$, and a timed valuation $v$ returns $N$ without the norms satisfied by $(p:a, v)$. We abstract this with $\textit{active}_\alpha : \mathbb{N} \times \mathbb{P} \times \mathbb{A} \rightarrow 2^\mathbb{N}$, which ignores timing constraints. Instead of returning one set, it returns a set of subsets of $N$ possibly satisfied when the action $p:a$ occurs (at any point in time). We later show that for every timed valuation $v$, this abstract set contains the concrete set of active norms.

Given an action $p:a$ and a set of norms $Ns$, we define $\activekw_\alpha(N, p:a) = \{N' \subseteq N \mid \forall O_{\tau}(p':a') \in N \cdot p':a' \neq p:a \implies O_{\tau}(p':a') \in N'\}$. Note every set in $\activekw_\alpha(N, p:a)$ is a subset of $N$ and contains every obligation in $N$ not over $p:a$. Furthermore, this latter set of obligations is the only set of norms not possibly satisfied by $p:a$ occurring at any point in time. Permissions and prohibitions, even over other actions, instead can be satisfied simply by the latest satisfaction of their corresponding timed constraint.

\begin{lemma}\label{lem:correctactive}
    For all time points $v$ and non-empty actions $p:a$, $active(N, (p:a, v)) \in \activekw_\alpha(N, p:a)$.
\end{lemma}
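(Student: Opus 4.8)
The plan is to unfold the definition of the concrete $\activenorms$ function and check that the set it returns for a given $v$ satisfies the membership condition defining $\activekw_\alpha(N, p:a)$. Recall that $\activenorms(N, (p:a, v))$ is, by the deontic semantics, the subset of $N$ obtained by discarding exactly those norms $n \in N$ with $\satisfied{n}{(p:a, v)}$. So write $N' \defeq \activenorms(N, (p:a, v)) = \{\, n \in N \mid \lnot\satisfied{n}{(p:a, v)}\,\}$. Clearly $N' \subseteq N$, so the only thing to verify is that every obligation $O_{\tau}(p':a') \in N$ with $p':a' \neq p:a$ lies in $N'$, i.e.\ that such an obligation is \emph{not} satisfied by the event $(p:a, v)$.

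The key step is therefore a case analysis on the satisfaction clauses given in the deontic semantics table. For an obligation, $\satisfied{\timednorm{O}{p'}{\tau}{a'}}{(p'':a'', v)}$ holds only in the row where the observed event matches the obligation's party and action, namely when $p'':a'' = p':a'$ and $\tau(v)$ holds; there is no clause making an obligation satisfied by an event on a different action (the only other way an obligation leaves the norm set would be via its \emph{violation}, i.e.\ its deadline passing, but that is governed by $\violated{\cdot}{\cdot}$, not $\satisfied{\cdot}{\cdot}$, and the $\activenorms$ function only removes \emph{satisfied} norms). Hence if $p':a' \neq p:a$ then $\satisfied{\timednorm{O}{p'}{\tau}{a'}}{(p:a, v)}$ is false, so $O_{\tau}(p':a') \in N'$. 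This establishes exactly the defining condition of $\activekw_\alpha(N, p:a)$, giving $N' \in \activekw_\alpha(N, p:a)$.

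I expect the main (mild) obstacle to be bookkeeping around the deontic step versus the temporal step: the configuration relation splits progression into a temporal part and a normative part, and one must be careful that the norm-set update performed on an action $(p:a,t)$ is precisely $\activenorms$ applied with the shifted valuation $\shift{v}{\delta}$ (which is the $v$ in the lemma statement), and not, say, the pre-shift valuation or the result of the temporal rule that changes automaton state. Once it is pinned down that the only norms removed are those for which $\satisfied{\cdot}{(p:a,v)}$ holds, and that no satisfaction clause fires for an obligation on an action other than $p:a$, the argument is immediate. A secondary point worth a sentence is the role of the hypothesis that $p:a$ is non-empty (a genuine action, not a silent/do-nothing event): this guarantees we are in the case where $\activenorms$ is actually invoked, rather than a degenerate situation, so that the statement is about the intended object.
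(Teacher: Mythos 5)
Your proof is correct and matches the paper's reasoning: the paper does not give a formal proof of this lemma (it is omitted from the appendix), but the remark immediately following the definition of $\activekw_\alpha$ --- that obligations over actions other than $p{:}a$ are the only norms that cannot be satisfied by $p{:}a$ occurring, while permissions and prohibitions may always be discharged by their timing constraint expiring --- is exactly the argument you formalise via the subset check and the case analysis on the $\textit{sat}$ clauses. Your closing caveats about the temporal/deontic step split and the shifted valuation are sensible bookkeeping but not needed for the lemma as stated, since it quantifies over an arbitrary valuation $v$.
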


\paragraph{Timing conditions:} To characterise the timing condition required to discharge atomic norms, we define the function $\tc \in \mathbb{N} \rightarrow \mathbb{T}$: 
\begin{align*}
    \tc(\timednorm{O}{p}{\tau}{a}) &\df \tau\\
    \tc(\timednorm{P}{p}{\tau}{a}) &\df \lambda v . v > \max(\tau)\\
    \tc(\timednorm{F}{p}{\tau}{a}) &\df \lambda v . v > \max(\tau)
\end{align*}

We overload this for sets of norms, writing $\tc(\Ns)$ for $\{\tc(N) \mid N \in \Ns\}$. Note that for permissions and prohibitions we can only discharge the norms when the timing predicate can no longer hold. For obligations, we can only discharge them when the timing predicate holds (and the action is performed). This definition mirrors the definition of satisfaction (\textit{sat}) for the respective norms.

We will use $\tc$ to identify when a set in $\activekw_\alpha$ is the only set of norms satisfied at some time point. Essentially, given a choice of $N' \in \activekw_\alpha(N, p:a)$, we want to be able to express a timing constraint $\Tau$ that is only true when there is a $v$ s.t. $N' = \activekw(N, (p:a, v))$. For ease of exposition, we define $\Tau$ for a set of norms that are satisfied ($N_{\textit{sat}}$) and not satisfied ($N_{\neg\textit{sat}}$). 

Given two norm sets $N_{\textit{sat}}$ and $N_{\neg\textit{sat}}$ we define the timing condition required for all norms in $N_{\textit{sat}}$ to be satisfied and for all norms $N_{\neg\textit{sat}}$ not to be satisfied: $\Tau(p:a, N_{\textit{sat}}, N_{\neg\textit{sat}})$, defined as 
$\tc(N_{\textit{sat}}) \wedge \neg \bigvee tc(\mathcal{P}(N_{\neg\textit{sat}}) \cup \mathcal{F}(N_{\neg\textit{sat}}) \cup \{O_\tau(p:a) \in N_{\neg\textit{sat}}\})$. 
Note that we treat obligations differently from the other types of norms, namely we ignore obligations not over $p:a$. These are immediately not satisfied by $p:a$ occurring, and not the action they predicate over. This definition does not deal with such obligations being in $N_{\textit{sat}}$, but our use of $\Tau$ will never include these in $N_{\textit{sat}}$.

We can show that for any two non-equivalent sets in $\activekw_{\alpha}(N, p:a)$, their respective constraints are mutually exclusive. We will require this later on to show determinism of out transformation.

\begin{lemma}\label{lem:mutexcl}
    Given $N_1, N_2 \in \activekw_{\alpha}(N, p:a)$, if $N_1 \neq N_2$ then $\Tau(p:a, N \setminus N_1, N_1) \wedge \Tau(p:a, N \setminus N_2, N_2)$ is false.
\end{lemma}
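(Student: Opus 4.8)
The plan is to exploit that the timing predicate $\tc(n)$ attached to a single norm $n$ depends only on $n$ itself, not on whether $n$ is being treated as ``satisfied'' or ``not satisfied'': the very same predicate $\tc(n)$ occurs as a conjunct of $\Tau(p:a, N_{\textit{sat}}, N_{\neg\textit{sat}})$ whenever $n \in N_{\textit{sat}}$, and (for $n$ a permission, a prohibition, or an obligation over $p:a$) as one of the disjuncts negated in $\Tau(p:a, N_{\textit{sat}}, N_{\neg\textit{sat}})$ whenever $n \in N_{\neg\textit{sat}}$. So it suffices to locate a norm that sits in the $N_{\neg\textit{sat}}$ slot of one of the two constraints and in the $N_{\textit{sat}}$ slot of the other, and is of a ``tracked'' type in the first.

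First I would use $N_1 \neq N_2$ to pick, without loss of generality --- the statement being symmetric in $N_1$ and $N_2$ --- a norm $n \in N_1 \setminus N_2$. The key observation is that $n$ cannot be an obligation $O_\tau(p':a')$ with $p':a' \neq p:a$: by definition every member of $\activekw_\alpha(N, p:a)$ contains all such obligations, so such an $n$ would lie in both $N_1$ and $N_2$, contradicting $n \notin N_2$. Hence $n$ is a permission, a prohibition, or an obligation over $p:a$, i.e.\ exactly one of the norm shapes tracked by $\Tau$ in its $N_{\neg\textit{sat}}$ argument.

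Then I would evaluate both constraints at an arbitrary clock valuation $v$. Since $n \in N_1$ and $n$ is of a tracked shape, $\tc(n)$ appears among the disjuncts negated inside $\Tau(p:a, N \setminus N_1, N_1)$, so $\Tau(p:a, N \setminus N_1, N_1)(v)$ entails $\lnot\tc(n)(v)$. On the other hand $n \in N$ (as $n \in N_1 \subseteq N$) and $n \notin N_2$, hence $n \in N \setminus N_2$, which is precisely the $N_{\textit{sat}}$ argument of $\Tau(p:a, N \setminus N_2, N_2)$; therefore $\tc(n)$ is one of its conjuncts and $\Tau(p:a, N \setminus N_2, N_2)(v)$ entails $\tc(n)(v)$. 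No $v$ can make $\tc(n)(v)$ both true and false, so the conjunction of the two constraints holds at no valuation, i.e.\ it is false.

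I do not expect a substantive obstacle: once the shape-of-$\tc$ observation is in place, the argument is pure bookkeeping. The points that need care are (i) justifying the ``without loss of generality'' step by the symmetry of the claim; (ii) confirming that the sub-case where $n$ is an obligation over $p:a$ is genuinely covered, since such obligations are the one kind of obligation that $\Tau$ does keep in $N_{\neg\textit{sat}}$; and (iii) noting that the side condition flagged after the definition of $\Tau$ --- obligations not over $p:a$ never appearing in the $N_{\textit{sat}}$ slot --- indeed holds for the instances $\Tau(p:a, N \setminus N_i, N_i)$, because $N_i \in \activekw_\alpha(N, p:a)$ already contains all such obligations, leaving $N \setminus N_i$ with none.
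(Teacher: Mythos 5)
Your proposal is correct and follows essentially the same route as the paper's own proof: pick $n \in N_1 \setminus N_2$ without loss of generality, rule out obligations over actions other than $p{:}a$ via the definition of $\activekw_\alpha$, and then derive the contradiction from $\tc(n)$ appearing negated in $\Tau(p{:}a, N \setminus N_1, N_1)$ (since $n \in N_1$) but positively in $\Tau(p{:}a, N \setminus N_2, N_2)$ (since $n \in N \setminus N_2$). Your write-up is in fact somewhat more explicit than the paper's about why the obligation-over-$p{:}a$ subcase is covered and why the side condition on $N_{\textit{sat}}$ holds, but the underlying argument is identical.
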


The proof follows by showing that if there is a permission or prohibition $n$ in $N_1$ but not in $N_2$ (or vice versa), both $tc(n)$ and its negation are in $\Tau(p:a, N \setminus N_2, N_2)$, leading to a contradiction. Obligations are handled by the definition of $\activekw_\alpha$.

Moreover, if a norm holds at a certain point in time, then the timing condition corresponding to the norm must be satisfied. 

\begin{lemma}\label{lem:tcsat}
    $sat(n, (p:a, v))$ implies $tc(n)(v)$
\end{lemma}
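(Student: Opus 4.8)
The plan is to proceed by case analysis on the form of the norm $n$, since both $\satisfied{\cdot}{\cdot}$ and $\tc$ are defined by cases on whether $n$ is an obligation, a permission, or a prohibition. The statement $\satisfied{n}{(p:a,v)}$ presupposes that the triggering action and party match those over which $n$ predicates (otherwise the relevant clause of $\satisfied{\cdot}{\cdot}$ does not fire), so in each case I may assume $n$ is a norm over exactly $p:a$.

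First I would treat the obligation case: if $n = \timednorm{O}{p}{\tau}{a}$, then by definition $\satisfied{n}{(p:a,v)} \df \tau(v)$, and $\tc(n) \df \tau$, so $\tc(n)(v) = \tau(v)$ holds immediately. Next, the permission and prohibition cases are symmetric: if $n = \timednorm{P}{p}{\tau}{a}$ (resp. $\timednorm{F}{p}{\tau}{a}$), then $\satisfied{n}{(p:a,v)} \df \exceeds{v}{\tau}$, i.e. $\forall \delta \in \timetype \cdot \lnot(\tau(\shift{v}{\delta}))$; and $\tc(n) \df \lambda v . v > \max(\tau)$, which is exactly the predicate $\exceeds{v}{\tau}$. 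Hence $\tc(n)(v)$ holds by the assumption, and the two definitions coincide by construction. In short, the lemma is essentially immediate from the observation — already flagged in the text following the definition of $\tc$ — that $\tc$ was defined to mirror $\satisfied{\cdot}{\cdot}$ for each norm type.

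The only subtlety worth spelling out, rather than a genuine obstacle, is the bookkeeping around which party/action the norm ranges over: the satisfaction predicates are written with a second, possibly different label $(p':a')$ in some rows (e.g. $\satisfied{\timednorm{O}{p}{\tau}{a}}{(p':a',v)} \df \exceeds{v}{\tau}$ is the \emph{violation}-adjacent row, whereas satisfaction of an obligation requires the matching action), so I would be careful to invoke the correct row — satisfaction of $O$ requires $(p:a)$ and gives $\tau(v)$, while satisfaction of $P$ and $F$ is insensitive to the action label and gives $\exceeds{v}{\tau}$. Once the correct clause is selected in each case, there is no real computation left; the proof is three one-line verifications.
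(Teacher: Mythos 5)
Your proposal is correct and follows essentially the same route as the paper's proof: a direct case analysis on the norm type, observing that $\tc$ was defined to mirror the satisfaction predicate (with the obligation case giving $\tau(v)$ directly and the permission/prohibition cases giving $\exceeds{v}{\tau}$), and handling the non-matching-action obligation case by noting satisfaction simply does not hold there. Your remark that satisfaction of permissions and prohibitions is insensitive to the action label is the right way to resolve the bookkeeping issue you flag, and matches the paper's treatment.
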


The proof follows easily by case analysis.

We are now ready to define our translation. Before, recall that in the timed semantics there are two different kinds of transitions: explicit transitions and implicit transitions. Our construction treats these differently, given the different semantics of each. A small note, is that we need to be careful with self-loops: there is a semantic difference between explicit self-loops and implicit self-loops. The former reactivate the ephemeral norms of a state, while the latter do not.

\paragraph{Translation:} From an automaton $M = \langle Q, q_0, \rightarrow, \pers, \eph \rangle$ our translation creates automaton $M^+ = \langle Q^+, q^+_0, \rightarrow^+, \emptyset, \eph^+ \rangle$, defined below:
\begin{enumerate}
    \item States, $Q^+ \in Q \times 2^{\mathbb{N}} \times 2^{\mathbb{N}}$, keep track of the active norms s.t. $(q, E, P) \in Q^+$ iff $q \in Q$, $E \subseteq eph(q)$, and $P \subseteq \{N \in pers(q') | q' \in Q\}$.
    \item $q^+_0 \df (q_0, eph(q_0), pers(q_0))$,
    \item $\eph^+((q, E, P)) = E \cup P$,
    \item $\rightarrow^+$ is defined as the smallest relation constructed by the following rules:
    \begin{enumerate}
        \item \textbf{Explicit transitions}: Given a state $(q, E, P)$, and a transition from $q$ in the original automaton, we consider all possible subsets of the originally persistent norms that can be satisfied by this transition, along with their conditions for satisfaction. Based on these, we create a new transition, leaving only the norms left to be satisfied, adding norms relevant to the new state, and adding the timing conditions for satisfaction of the removed norms to the guard.

        \begin{prooftree}\scriptsize
        \AxiomC{$q \xrightarrow{p:a | \tau \mapsto \rho} q'$}
        \AxiomC{$P' \in \activekw_\alpha(P, p:a)$}
        \BinaryInfC{$(q, E, P) \xrightarrow{p:a | \tau \wedge \Tau(p:a, P \setminus P', P') \mapsto \rho} (q', eph(q'), P' \cup pers(q'))$}
        \end{prooftree}

        \item \textbf{Implicit transitions}: For each state $q$ and action $p:a$, construct a condition that captures when no corresponding transition is triggered. Further consider all possible subsets of the ephemeral and originally persistent norms that can be satisfied by performing $p:a$. Based on these, we create a corresponding new self-loop transition, with a guard capturing the implicit transition triggering and the satisfaction of the guessed norms. 
    
\begin{prooftree}\scriptsize
        \AxiomC{$\tau' = \neg \bigvee \{\tau \mid
            q \xrightarrow{p:a | \tau \mapsto \rho} q'\}$}
        \AxiomC{$E' \in \activekw_\alpha(E, p:a)$}
        \AxiomC{$P' \in \activekw_\alpha(P, p:a)$}
        \TrinaryInfC{$(q, E, P) \xrightarrow{p:a | \tau' \wedge \tau_{to} \wedge \Tau(p:a, E \setminus E', E') \wedge \Tau(p:a, P \setminus P', P') \mapsto \rho_{id}} (q, E', P')$}
        \end{prooftree}

    \end{enumerate}

\end{enumerate}

Note how this reduction maintains the determinism of the original automaton, given we just refine the transitions of $M$ further and keep track of satisfaction and activation of norms. Note that any two transitions from the same state, corresponding to the same transition in $M$, are mutually exclusive given they are generated from different norm subsets.

\begin{lemma}
    $M^+$ is deterministic.
\end{lemma}

The proof follows from Lemma~\ref{lem:mutexcl}, given that we are taking the conjunction of the $\Tau$ condition with each guess of next active norms.

Finally, we show that $M^+$ has the same semantics as $M$. First we focus on the normed semantics.
Violation in $M$ implies violation in $M^+$.

\begin{lemma}\label{lem:viol}
    $(q, v, P, E) \xrightarrow[norm]{p:a, t} \perp$ implies $((q, E, P), v, \emptyset, E \cup P) \xrightarrow[norm]{p:a, t} \perp$
\end{lemma}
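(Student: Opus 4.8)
The statement relates the violation-step in $M$ from a configuration whose state is $q$, with persistent norms $P$ and ephemeral norms $E$, to the violation-step in $M^+$ from the corresponding configuration whose state is $(q,E,P)$, with empty persistent norms and ephemeral norms $E\cup P$. The plan is to unfold both sides against the deontic semantics rules, and observe that the relevant antecedent — the existence of a violated norm — coincides, because the norms in play are literally the same multiset. First I would unfold the left-hand side: $(q,v,P,E)\xRightarrow[\textit{norm}]{p:a,t}\bot$ holds, by the first deontic rule, precisely when there exists $n\in P\cup E$ with $\violated{n}{(p:a,\shift{v}{\delta})}$, where $\delta=t-v(\gamma)$. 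Next I would unfold the right-hand side: $((q,E,P),v,\emptyset,E\cup P)\xRightarrow[\textit{norm}]{p:a,t}\bot$ holds, again by the first deontic rule, precisely when there exists $n\in\emptyset\cup(E\cup P)$ with $\violated{n}{(p:a,\shift{v}{\delta})}$, with the same $\delta$ since the clock valuation $v$ and the event $(p:a,t)$ are unchanged. Since $\emptyset\cup(E\cup P)=P\cup E$ as sets of norms, the witnessing norm $n$ on the left is immediately a witness on the right (and conversely, giving the stronger iff, though only one direction is claimed).

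The one point that needs a word of care is the role of $\delta$: the deontic step's side-condition computes $\delta$ from the state's clock valuation, and in the translated configuration the clock valuation component is still $v$ (the translation does not touch clock valuations — only the state and the norm bookkeeping), and the global clock $\gamma$ is never reset by well-formedness, so $v(\gamma)$ and hence $\delta=t-v(\gamma)$ is identical on both sides. Consequently $\shift{v}{\delta}$ is the same clock valuation in both violation checks, and $\violated{n}{(p:a,\shift{v}{\delta})}$ evaluates identically. I would state this explicitly to justify that the two instances of the first deontic rule have matching premises.

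I do not expect any real obstacle here: the lemma is essentially a bookkeeping observation that the translation redistributes norms between the persistent and ephemeral slots without changing their union, and the violation condition depends only on that union together with the (unchanged) clock valuation and event. The mild subtlety, and the only thing worth spelling out, is the invariance of $\delta$ under the translation, which rests on the clock valuation being carried over verbatim and on well-formedness guaranteeing $\gamma$ is untouched.
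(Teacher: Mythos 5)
Your proposal is correct and follows essentially the same route as the paper's proof: unfold the first deontic rule on both sides and observe that the witnessing violated norm in $P\cup E$ is also a witness in $\emptyset\cup(E\cup P)$, with the same $\delta$ since the clock valuation $v$ is carried over unchanged. The paper's version is a one-liner stating exactly this; your extra remark on the invariance of $\delta$ is a harmless elaboration of the same observation.
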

\begin{proof}
    The assumptions implies $\exists n \in P \cup E \cdot vio(n, (p:a, \shift{v}{\delta}))$. This is exactly the condition required to conclude the right-hand side.
\end{proof}

When there is no violation, there is a direct correspondence between the next configurations in each automaton.

\begin{lemma}\label{lem:sat}
    $(q, v, P, E) \xrightarrow[norm]{p:a, t} (q, v, P', E')$ implies $((q, E, P), v, \emptyset, E \cup P) \xrightarrow[norm]{p:a, t} ((q, E, P), v, \emptyset, E' \cup P')$
\end{lemma}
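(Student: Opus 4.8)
The plan is to unfold the definition of the deontic transition relation $\xRightarrow[norm]{p:a,t}$ on both sides and check that the two applications of the second deontic rule (the non-violation rule) land on matching configurations. The hypothesis $(q, v, P, E) \xrightarrow[norm]{p:a, t} (q, v, P', E')$ can only have been derived by the second deontic rule, so it gives us $\delta = t - v(\gamma)$, the side condition $\lnot\exists n \in P \cup E \cdot \violated{n}{(p:a, \shift{v}{\delta})}$, and the identities $P' = \activenorms(P, (p:a, \shift{v}{\delta}))$ and $E' = \activenorms(E, (p:a, \shift{v}{\delta}))$.

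Next I would apply the deontic rules to the right-hand configuration $((q,E,P), v, \emptyset, E\cup P)$. Here the "persistent" component is $\emptyset$ and the "ephemeral" component is $E \cup P$, so the disjunction $P_{\mathit{rhs}} \cup E_{\mathit{rhs}}$ appearing in the rules is just $\emptyset \cup (E\cup P) = P \cup E$, which is exactly the set over which the left-hand side quantified. Hence the non-violation side condition for the right-hand side is literally the same proposition we already have from the hypothesis, so the second deontic rule applies, producing $((q,E,P), v, \emptyset, \activenorms(\emptyset, (p:a,\shift{v}{\delta})), \activenorms(E\cup P, (p:a,\shift{v}{\delta})))$. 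Since $\activenorms$ removes the satisfied norms from its first argument, $\activenorms(\emptyset, \cdot) = \emptyset$, and $\activenorms(E\cup P, (p:a,\shift{v}{\delta})) = \activenorms(E, (p:a,\shift{v}{\delta})) \cup \activenorms(P, (p:a,\shift{v}{\delta})) = E' \cup P'$ — the key point being that $\activenorms$ distributes over union because it is defined pointwise by deleting exactly those elements satisfied by the given action and valuation. This yields the claimed target configuration $((q,E,P), v, \emptyset, E' \cup P')$.

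The main obstacle — really the only thing requiring care rather than pure unfolding — is the observation that $\activenorms(X \cup Y, \cdot) = \activenorms(X, \cdot) \cup \activenorms(Y, \cdot)$, i.e. that partitioning the norm set between the $P$- and $E$-slots versus lumping it all into one slot does not change the outcome of applying $\activenorms$. This is immediate from the definition of $\activenorms$ as set-difference with the (valuation-independent-of-the-partition) set of satisfied norms, but it is worth stating explicitly since it is precisely what makes the persistent-to-ephemeral merge semantically transparent. I would therefore write the proof as: (1) invert the hypothesis to extract $\delta$, the non-violation condition, and the formulas for $P', E'$; (2) note $P\cup E$ is the relevant norm set on both sides, so the same non-violation condition fires the second rule on the right; (3) compute $\activenorms(\emptyset,\cdot)=\emptyset$ and use distributivity of $\activenorms$ over $\cup$ to identify the resulting ephemeral set with $E'\cup P'$.

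\begin{proof}
The hypothesis can only be derived by the second (non-violation) deontic rule, so with $\delta = t - v(\gamma)$ it gives $\lnot\exists n\in P\cup E\cdot\violated{n}{(p:a,\shift{v}{\delta})}$, together with $P' = \activenorms(P,(p:a,\shift{v}{\delta}))$ and $E' = \activenorms(E,(p:a,\shift{v}{\delta}))$. Now consider the right-hand configuration $((q,E,P),v,\emptyset,E\cup P)$: the norm set appearing in the deontic rules is $\emptyset\cup(E\cup P) = P\cup E$, so the side condition for the second rule is exactly the proposition we already have. Applying that rule yields $((q,E,P),v,\emptyset,\activenorms(\emptyset,(p:a,\shift{v}{\delta})),\activenorms(E\cup P,(p:a,\shift{v}{\delta})))$. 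Since $\activenorms$ merely removes from its first argument those norms satisfied by $(p:a,\shift{v}{\delta})$, we have $\activenorms(\emptyset,\cdot)=\emptyset$ and $\activenorms(E\cup P,(p:a,\shift{v}{\delta})) = \activenorms(E,(p:a,\shift{v}{\delta}))\cup\activenorms(P,(p:a,\shift{v}{\delta})) = E'\cup P'$. Hence the right-hand side steps to $((q,E,P),v,\emptyset,E'\cup P')$, as required.
\end{proof}
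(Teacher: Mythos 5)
Your proof is correct and follows essentially the same route as the paper's: invert the hypothesis to obtain the non-violation condition over $P\cup E$, observe that this is exactly the side condition needed for the target configuration, and conclude via distributivity of $\activenorms$ over union. Your write-up is somewhat more explicit about why distributivity holds and about the $\activenorms(\emptyset,\cdot)=\emptyset$ step, but the substance is identical.
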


The proof follows from the definition of the norm transition semantics and distributivity of $active$.

We define $\textit{conf}^+_0 \df (q^+_0, v_0, \emptyset, eph(q^+_0)$.

\begin{theorem}\label{thm:semanticscorres}
    For all timed traces $ts$, there are sets of timed norms P and E such that $\textit{conf}_0 \xRightarrow{ts}_{M} (q, v, P, E)$ iff $\textit{conf}^+_0 \xRightarrow{ts}_{M^+} ((q, E, P), v, \emptyset, E \cup P)$.
\end{theorem}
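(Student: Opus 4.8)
The plan is to prove Theorem~\ref{thm:semanticscorres} by induction on the length of the timed trace $ts$, using the decomposition of each event step into a temporal step followed by a deontic step, and treating the two directions of the biconditional together since the construction is essentially a deterministic refinement.

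\medskip

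\textbf{Base case.} For the empty trace, $\textit{conf}_0 = (q_0, \lambda c\cdot 0, \pers(q_0), \eph(q_0))$ and $\textit{conf}^+_0 = (q^+_0, v_0, \emptyset, \eph^+(q^+_0))$ where $q^+_0 = (q_0, \eph(q_0), \pers(q_0))$ and $\eph^+(q^+_0) = \eph(q_0) \cup \pers(q_0)$. Taking $P = \pers(q_0)$ and $E = \eph(q_0)$, the required correspondence $((q_0, E, P), v_0, \emptyset, E \cup P)$ holds by definition, and $v_0 = \lambda c \cdot 0$ matches.

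\medskip

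\textbf{Inductive step.} Suppose the statement holds for a trace $ts$, ending (in $M$) in configuration $(q, v, P, E)$ and (in $M^+$) in $((q, E, P), v, \emptyset, E \cup P)$, and consider extending it by one event $(p:a, t)$ with $\delta = t - v(\gamma)$. I would argue first about the temporal step and then compose with the deontic step. For the temporal step there are two cases depending on whether an explicit transition of $M$ fires. If $q \transition{p:a}{\tau}{\rho} q'$ with $\tau(\shift{v}{\delta})$, then in $M$ we move to $(q', (\shift{v}{\delta})\oplus\rho, P \cup \pers(q'), \eph(q'))$. On the $M^+$ side, I must identify \emph{which} refined explicit transition out of $(q, E, P)$ fires: it is the one generated from the rule with the particular choice $P' = \activekw(P, (p:a, \shift{v}{\delta}))$, i.e.\ the concrete set of surviving persistent norms. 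By Lemma~\ref{lem:correctactive} this $P'$ indeed lies in $\activekw_\alpha(P, p:a)$, so the transition exists; and the guard $\tau \wedge \Tau(p:a, P\setminus P', P')$ is satisfied at $\shift{v}{\delta}$ because $\tau$ is, and because $\Tau$ holds exactly when $P'$ is the active set — here I need the "soundness of $\Tau$" fact, which follows from Lemma~\ref{lem:tcsat} (each satisfied norm's timing condition holds) together with the construction of $\activekw_\alpha$ and the definition of $\tc$ (the non-satisfied permissions/prohibitions have their latest-satisfaction predicate false at $\shift{v}{\delta}$, so their $\tc$ disjunct is false as required by the negated clause of $\Tau$). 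Determinism (the preceding Lemma, via Lemma~\ref{lem:mutexcl}) guarantees this is the \emph{only} refined transition that fires, which is what makes the "iff" work. The new $M^+$-state is $(q', \eph(q'), P' \cup \pers(q'))$, and since the deontic step does not change the automaton state, the final $M^+$-configuration is $((q', \eph(q'), P'\cup\pers(q')), v'', \emptyset, \ldots)$; matching this against the claimed form requires setting the new $E_{\mathrm{new}} = \eph(q')$ and $P_{\mathrm{new}} = $ (the result of applying $\activekw$ to $P' \cup \pers(q')$ in the deontic step), and checking it agrees with what the deontic semantics does in $M$ — which is precisely Lemma~\ref{lem:sat} (for the non-violating deontic step) and Lemma~\ref{lem:viol} (for the $\bot$ case, where both sides abort and there is nothing to match). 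If instead no explicit transition of $M$ fires, then $M$ stays in $q$ via the do-nothing rule, and $M^+$ fires the appropriate \emph{implicit} (self-loop) refined transition: here I use that $\tau' = \neg\bigvee\{\tau \mid q \transition{p:a}{\tau}{\rho}q'\}$ holds at $\shift{v}{\delta}$ precisely because no explicit guard is satisfied, choose $E' = \activekw(E, (p:a, \shift{v}{\delta}))$ and $P' = \activekw(P, (p:a, \shift{v}{\delta}))$, and then the same $\Tau$-soundness and determinism arguments apply, with the $\rho_{id}$ reset leaving the valuation as $\shift{v}{\delta}$. Finally, composing with the deontic step again via Lemmas~\ref{lem:viol} and \ref{lem:sat} closes the induction. (I would note that the paper's $\tau_{to}$ in the implicit rule refers to timeouts, which the footnote says are out of scope, so I would either drop it or treat it as vacuously true.)

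\medskip

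\textbf{Main obstacle.} The delicate point is the "$\Tau$ captures exactly the active set" claim, i.e.\ that for the \emph{concrete} choice $N' = \activekw(N, (p:a, v))$ the constraint $\Tau(p:a, N\setminus N', N')$ evaluates to true at $v$, and — via Lemma~\ref{lem:mutexcl} — at no other choice. The excerpt only gives us Lemma~\ref{lem:tcsat} in one direction ($\mathit{sat}$ implies $\tc$) and mutual exclusivity; I expect the forward direction for the non-satisfied obligations-over-other-actions to need the structural fact (stated in the prose around $\activekw_\alpha$) that such obligations are \emph{never} in the active set after $p:a$, hence correctly excluded from the negated clause of $\Tau$. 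Handling self-loops correctly — distinguishing explicit self-loops (which reset $\eph(q)$) from implicit ones (which preserve $E$) — is the secondary subtlety, but it is handled uniformly by the case split above since an explicit self-loop $q \transition{p:a}{\tau}{\rho} q$ is dispatched by the explicit-transition rule with $q' = q$, giving $E_{\mathrm{new}} = \eph(q)$ as desired, whereas the implicit rule keeps $E' \subseteq E$.
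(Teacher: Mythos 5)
Your proposal matches the paper's own proof in all essentials: induction on the trace length, a three-way case split (violation, explicit transition, implicit transition), Lemma~\ref{lem:correctactive} to pick out the refined transition for the concrete active set, Lemma~\ref{lem:tcsat} together with the definition of $\tc$ to show the $\Tau$ guard holds, and Lemmas~\ref{lem:viol} and~\ref{lem:sat} for the deontic step. The only cosmetic difference is the order in which you compose the temporal and deontic sub-steps, and you correctly flag the same delicate point (the converse direction of Lemma~\ref{lem:tcsat} for the negated clause of $\Tau$) that the paper itself treats informally.
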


The proof follows by induction on the length of $ts$ and the two previous lemmas.

Two immediate corollaries are that the two automata are equivalent with respect to violation and conflict-freedom. Both immediately follow from the one-to-one correspondence of active persistent and ephemeral norms between configurations of $M$ and $M^+$.

\begin{corollary}
    A timed contract automaton is violated if and only if its flattening is violated.
\end{corollary}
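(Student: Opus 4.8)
\emph{Plan.} First I would unfold the definitions: ``$M$ is violated'' means some timed trace $ts$ satisfies $\textit{conf}_0 \xRightarrow{ts}_M \bot$, and likewise for $M^+$, so the claim is the existence of such a trace for $M$ iff for $M^+$. Since the initial configuration is proper and $\bot$ is produced only by the deontic rule, an offending $ts$ is nonempty, $ts = ts' \cdot (p:a,t)$, with $\textit{conf}_0 \xRightarrow{ts'}_M (q,v,P,E)$ landing in a \emph{proper} configuration and the last event splitting as $(q,v,P,E)\xRightarrow[\textit{temp}]{p:a,t}(\hat q,\hat v,\hat P,\hat E)\xRightarrow[\textit{norm}]{p:a,t}\bot$. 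By Theorem~\ref{thm:semanticscorres} the prefix $ts'$ puts $M^+$ in the corresponding configuration $((q,E,P),v,\emptyset,E\cup P)$. Thus it suffices to show that, once $M$ and $M^+$ sit in corresponding configurations, the final event is fatal in one iff in the other.

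\emph{From $M$ to $M^+$.} I would case-split on whether $M$'s temporal step uses an explicit transition $q\transition{p:a}{\tau}{\rho}q'$ or the implicit do-nothing rule, and let $n\in\hat P\cup\hat E$ with $\violated{n}{(p:a,\hat v)}$ be the witness of the violation. For the replicating $M^+$ transition I would pick the guess $P'\defeq\activekw(P,(p:a,\hat v))$ (and, in the implicit case, $E'\defeq\activekw(E,(p:a,\hat v))$). By Lemma~\ref{lem:correctactive} these are admissible choices in $\activekw_\alpha$; by Lemma~\ref{lem:tcsat} together with the elementary fact that no single event can both satisfy and violate a norm, the $\Tau$-conjuncts of the $M^+$ guard evaluate to true at the valuation reached by the temporal step, and the remaining conjuncts ($\tau$, or $\neg\bigvee\{\tau\mid q\transition{p:a}{\tau}{\rho}q'\}$ and the timeout conjunct) carry over from the fact that $M$ took the step. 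Hence $M^+$ fires the replicated transition into a state whose $\eph^+$ set is $\eph(q')\cup P'\cup\pers(q')$ (resp.\ $E'\cup P'$); since $\activekw$ discards only norms \emph{satisfied} by the event and $n$ is violated by it, $n$ remains in that set, so the deontic rule of $M^+$ reports a violation, by Lemma~\ref{lem:viol}.

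\emph{From $M^+$ to $M$.} Here no subset juggling is needed. Every proper $M^+$ configuration reachable by a trace has shape $((q,E,P),v,\emptyset,E\cup P)$ and, by Theorem~\ref{thm:semanticscorres}, corresponds to $M$ reaching $(q,v,P,E)$, with the same aggregate norm set $E\cup P$. A fatal $M^+$ event uses either an explicit transition (a refinement of some $q\transition{p:a}{\tau}{\rho}q'$ of $M$, whose guard implies $\tau$) or an implicit self-loop (whose guard contains $\neg\bigvee\{\tau\mid q\transition{p:a}{\tau}{\rho}q'\}$ and so forces $M$'s do-nothing rule). Either way $M$ takes the matching temporal step into a configuration whose norm set is a \emph{superset} of the $\eph^+$ set checked by $M^+$ (which carries only subsets of $P$ and $E$), so the norm $M^+$ found violated is also violated in $M$'s configuration and $M$'s deontic rule yields $\bot$. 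Reassembling the prefix with the last step gives both implications.

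\emph{Main obstacle.} The delicate part is the $M$-to-$M^+$ direction: one must produce a single guess $P'$ (and $E'$) that is simultaneously admissible for the translation rule (Lemma~\ref{lem:correctactive}), makes its $\Tau$-guard true at the valuation reached by the temporal step (Lemma~\ref{lem:tcsat} plus satisfy/violate disjointness), and still contains the offending norm; and one must keep explicit and implicit self-loops apart so that ephemeral norms are neither spuriously kept nor dropped. The remainder is routine unfolding of the two transition rules and the distributivity of $\activekw$ over unions already used for Lemma~\ref{lem:sat}. The companion corollary on conflict-freedom follows in exactly the same way, using additionally that $\localconflicts$ depends only on the aggregate active-norm set, which Theorem~\ref{thm:semanticscorres} preserves.
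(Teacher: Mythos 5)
Your overall route is the one the paper intends: the corollary is meant to fall out of Theorem~\ref{thm:semanticscorres} (which carries the offending prefix to a corresponding $M^+$ configuration with the identical aggregate norm set $E\cup P$) together with Lemma~\ref{lem:viol} for the final fatal event, and that is exactly how you assemble it. One caveat, though: you decompose the last event as a temporal step followed by the deontic check, whereas the paper's own proof of Theorem~\ref{thm:semanticscorres} applies the deontic step \emph{first}, on the pre-transition configuration $(q,v,P,E)$. Under the paper's order your entire ``does the violated norm survive into the target state'' analysis is unnecessary --- Lemma~\ref{lem:viol} applies verbatim at the corresponding configurations, since both sides check the same set $P\cup E$. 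Under your order the argument is actually fragile: you justify that the witness $n$ persists because ``$\activekw$ discards only norms satisfied by the event,'' but the temporal rule of $M$ replaces the ephemeral set $E$ wholesale by $\eph(q')$ on a state change, so an ephemeral $n\in E$ can disappear for reasons unrelated to satisfaction. So either adopt the paper's deontic-first ordering (in which case the corollary really is immediate from the two cited results), or, if you keep your ordering, you must separately account for the discarding of $E$ by the temporal rule rather than appealing only to the satisfy/violate disjointness.
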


\begin{corollary}\label{lem:conflictmaintained}
    A timed contract automaton is conflict-free if and only if its ephemeral flattening is conflict-free: $\forall M \st 
\tcaconflictfree(M)
\iff
\tcaconflictfree(M^+)$.
\end{corollary}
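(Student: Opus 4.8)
The plan is to derive Corollary~\ref{lem:conflictmaintained} directly from Theorem~\ref{thm:semanticscorres}, which already establishes a bijective correspondence between the reachable configurations of $M$ and those of $M^+$ indexed by timed traces. Recall that $\tcaconflictfree(M)$ (Definition~\ref{def:conflictfree}) says that for every timed trace $\textit{ts}$ and every configuration $(q,v,P,E)$ with $\textit{conf}_0 \xRightarrow{\textit{ts}} (q,v,P,E)$ we have $\lnot\localconflicts(P\cup E, v)$, and similarly for $M^+$. So the statement reduces to showing that the two families of predicates $\localconflicts(P\cup E, v)$ (ranging over configurations of $M$ reachable by $\textit{ts}$) and $\localconflicts(E'\cup P', v')$ (ranging over configurations of $M^+$ reachable by $\textit{ts}$) quantify over the same propositions.

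First I would fix an arbitrary timed trace $\textit{ts}$. By Theorem~\ref{thm:semanticscorres}, $\textit{conf}_0 \xRightarrow{\textit{ts}}_{M} (q,v,P,E)$ holds if and only if $\textit{conf}^+_0 \xRightarrow{\textit{ts}}_{M^+} ((q,E,P),v,\emptyset,E\cup P)$. The crucial observation is that the set of norms carried by the $M^+$ configuration is $\emptyset \cup (E\cup P) = E\cup P$, exactly the union $P\cup E$ of persistent and ephemeral norms in the corresponding $M$ configuration, and the clock valuation $v$ is identical in both. Hence $\localconflicts(P\cup E, v)$ and $\localconflicts(\emptyset \cup (E\cup P), v)$ are literally the same proposition. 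This gives the contrapositive-free equivalence: there is a trace $\textit{ts}$ reaching a conflicting configuration in $M$ if and only if there is a trace $\textit{ts}$ (the same one) reaching a conflicting configuration in $M^+$. Negating both sides and universally quantifying over $\textit{ts}$ yields $\tcaconflictfree(M) \iff \tcaconflictfree(M^+)$.

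For the write-up I would phrase it as: assume $\lnot\tcaconflictfree(M)$; then there is $\textit{ts}$ and $(q,v,P,E)$ with $\textit{conf}_0 \xRightarrow{\textit{ts}} (q,v,P,E)$ and $\localconflicts(P\cup E, v)$; by Theorem~\ref{thm:semanticscorres} the configuration $((q,E,P),v,\emptyset,E\cup P)$ is reachable in $M^+$ by the same $\textit{ts}$, and since its norm set is $E\cup P$ with the same valuation $v$, $\localconflicts(\emptyset\cup(E\cup P), v)$ holds, so $\lnot\tcaconflictfree(M^+)$; the converse direction is symmetric, using the ``only if'' half of the theorem and the fact that every reachable configuration of $M^+$ has the shape $((q,E,P),v,\emptyset,E\cup P)$.

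The only mild subtlety — and the one step worth stating explicitly rather than the ``main obstacle'' in any real sense — is that the theorem as stated gives an $M^+$ configuration in the specific shape $((q,E,P),v,\emptyset,E\cup P)$ with empty persistent component, so one must note that this shape is exhaustive: by construction $M^+$ has $\pers^+ = \emptyset$ everywhere, so \emph{every} configuration reachable in $M^+$ indeed has an empty persistent component and its conflict status depends only on the ephemeral component $E\cup P$ and $v$. With that remark in place the argument is a one-line unfolding of the definitions against Theorem~\ref{thm:semanticscorres}, so I would keep the proof to a couple of sentences.
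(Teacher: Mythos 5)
Your proposal is correct and follows essentially the same route as the paper, which derives the corollary directly from the configuration correspondence of Theorem~\ref{thm:semanticscorres} together with the observation that the norm set $\emptyset\cup(E\cup P)$ and the valuation $v$ coincide in the matched configurations. Your explicit remark that every reachable configuration of $M^+$ has an empty persistent component is a reasonable detail to spell out, but it does not change the argument.
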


\noindent\textbf{Complexity} Given the definition and use of $\activekw_\alpha$ the construction is exponential in the maximum sum, given a state in $M$, of: (1) the number of persistent obligations over the same action and with different timing predicates; and (2) the number of permissions and prohibitions.

\section{Conflict Analysis}\label{s:ca}

Recall that conflict-freedom (Defn.~\ref{def:conflictfree}) is a semantic property, requiring there is no induced configuration with conflicting norms.
The reduction to ephemeral norms maintains the same semantics as the original automaton, however it gives an automaton that corresponds more tightly to the semantics. In fact, for each runtime configuration, there is a corresponding induced state in $M^+$ that is labelled with the same norms. 

 This allows to give a simple sound algorithm to find conflicts in $M^+$, by simply analysing states for potential conflicts, and checking for satisfiability of their predicates. That is, for each state in $M^+$, consider the set of ephemeral norms $E$ and $P$ associated with it and check for satisfiability of the formula $\textit{local-conflict}(E \cup P) \df \exists v \cdot \localconflicts(E \cup P, v)$. 
 
If we find no such local conflict in an ephemeral reduction then we can conclude that $M$ is conflict-free.

\begin{theorem}
    Given a timed automata $M$, if there is no local conflict in $M^+$, then $M$ is conflict-free: If for all states $(q, E, P)$ in $Q_{M^+}$ $\neg(\textit{local-conflict}(E \cup P))$ holds, then $\textit{conflict-free(M)}$. 
\end{theorem}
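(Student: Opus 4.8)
I would push the statement through Corollary~\ref{lem:conflictmaintained}, which gives $\tcaconflictfree(M) \iff \tcaconflictfree(M^+)$, so that it suffices to prove: if no state $(q, E, P)$ of $M^+$ satisfies $\textit{local-conflict}(E \cup P)$, then $\tcaconflictfree(M^+)$. Unfolding Definition~\ref{def:conflictfree} for $M^+$, the goal becomes: for every timed trace $ts$ and every configuration $(\hat q, v, \hat P, \hat E)$ with $\textit{conf}^+_0 \xRightarrow{ts}_{M^+} (\hat q, v, \hat P, \hat E)$, show $\lnot\localconflicts(\hat P \cup \hat E, v)$.

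The crux is to establish that every configuration reachable in $M^+$ has the canonical shape $((q, E, P), v, \emptyset, E \cup P)$ with $(q, E, P) \in Q_{M^+}$, so that $\hat P \cup \hat E = E \cup P = \eph^+((q,E,P))$. For this I would run $M$ on $ts$ and split cases. If some prefix of $ts$ drives $M$ into $\bot$, then by the violation-preservation corollary $M^+$ reaches $\bot$ on that prefix too; since $\bot$ has no successors and the event-level relation $\xRightarrow{p:a,t}$ is deterministic (inherited from well-formedness of the automaton), $M^+$ cannot then reach any ordinary configuration on all of $ts$, so there is nothing to verify for that trace. Otherwise $M$ reaches a configuration $(q, v, P, E)$ after $ts$, and Theorem~\ref{thm:semanticscorres} gives $\textit{conf}^+_0 \xRightarrow{ts}_{M^+} ((q, E, P), v, \emptyset, E \cup P)$; by determinism of $M^+$ this must coincide with $(\hat q, v, \hat P, \hat E)$, whence $\hat P = \emptyset$ and $\hat P \cup \hat E = E \cup P$.

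To close, I would apply the hypothesis to the state $(q, E, P)$, obtaining $\lnot\,\textit{local-conflict}(E \cup P)$, i.e. $\lnot\exists v' \cdot \localconflicts(E \cup P, v')$; taking $v'$ to be $v$ yields $\lnot\localconflicts(E \cup P, v) = \lnot\localconflicts(\hat P \cup \hat E, v)$, which is exactly Definition~\ref{def:conflictfree}'s requirement, and transporting this back through Corollary~\ref{lem:conflictmaintained} gives $\tcaconflictfree(M)$. I expect the main obstacle to be precisely the middle step: Theorem~\ref{thm:semanticscorres} only asserts the existence of matching runs in $M$ and $M^+$, so turning it into a statement about \emph{all} reachable configurations of $M^+$ needs the determinism lemma for $M^+$, the case split on whether $M$'s run on $ts$ reaches $\bot$, and a confirmation that $\xRightarrow{p:a,t}$ is deterministic at the configuration level (so that "reaches $\bot$" and "reaches a configuration" are mutually exclusive for a fixed trace). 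The remaining steps are routine unfolding of the two definitions.
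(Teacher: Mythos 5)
Your proof is correct, but it takes a genuinely longer route than the paper intends. The paper gives no explicit proof of this theorem; its justification is the remark immediately preceding it, that ``for each runtime configuration, there is a corresponding induced state in $M^+$ that is labelled with the same norms.'' That is a \emph{direct} argument: take any timed trace $ts$ with $\textit{conf}_0 \xRightarrow{ts}_M (q,v,P,E)$; the $M$-to-$M^+$ direction of Theorem~\ref{thm:semanticscorres} yields $\textit{conf}^+_0 \xRightarrow{ts}_{M^+} ((q,E,P),v,\emptyset,E\cup P)$, so $(q,E,P)$ is a state of $M^+$ (reachable configurations satisfy $E \subseteq \eph(q)$ and $P \subseteq \bigcup_{q'}\pers(q')$, as required for membership in $Q^+$); the hypothesis applied to that state gives $\lnot\exists v' \cdot {\localconflicts}(E\cup P, v')$, hence $\lnot{\localconflicts}(P\cup E, v)$, which is exactly the condition of Definition~\ref{def:conflictfree}. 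Your detour through Corollary~\ref{lem:conflictmaintained} and conflict-freedom of $M^+$ is sound, but it obliges you to characterise \emph{all} configurations reachable in $M^+$, which is why you need determinism of $M^+$, determinism of $\xRightarrow{p:a,\;t}$ at the configuration level, and the case split on $\bot$ --- none of which the direct argument touches, since Definition~\ref{def:conflictfree} quantifies over configurations of $M$ and only the forward direction of the correspondence is needed. Moreover, Corollary~\ref{lem:conflictmaintained} is itself justified in the paper by the very same configuration correspondence, so routing through it re-derives rather than reuses that work. On the positive side, your version correctly flags a real looseness in the paper: the ``iff'' of Theorem~\ref{thm:semanticscorres} by itself does not pin down every reachable $M^+$-configuration, and your determinism-plus-$\bot$ analysis is exactly what one would need to make the $M^+$-side argument airtight.
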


This approach is sound but not complete, since some conflicting states may not be reachable. We can make the reduction finer by pruning away transitions with unsatisfiable conditions, however this is not sufficient for completeness. For example, consider a conflicting state that has an obligation that is required to be satisfied before the conflicting norms are activated, while satisfying the obligation implies leaving the state. Moreover, the state may be only possibly visited after at least one of the timing conditions of the conflicting norms no longer can hold. 

We believe a sound and complete conflict analysis may be possible, through further transforming the ephemeral flattening into a timed safety automaton with special sink states denoting conflicts. Then conflict analysis can be reduced to reachability of timed automata, which is PSPACE-complete \cite{bouyer2010model}. Given we are already paying an exponential cost to reduce persistent norms into ephemeral norms, here we prefer to rely on the sound algorithm given it can capture all the possible conflicts.

\section{Case Study}\label{s:casestudy}

To illustrate our approach we consider the case study discussed in Section~\ref{s:intro}. Consider the following two clauses: 
\begin{enumerate}[(i)]
    \item Party $A$ is obliged to release a resource within 15 minutes of party $B$ requesting it (if they hold it); 
    \item It is forbidden for party $B$ to request a resource during a high-priority transaction.  
    \item It is prohibited for party $A$ to release a resource during a high-priority transaction.
\end{enumerate}

Fig.~\ref{fig:aut} illustrates part of the corresponding automaton. Actions \textit{get}, \textit{release} and \textit{request} correspond to getting, releasing and requesting the resource in question, while actions \textit{start} and \textit{end} correspond to starting and ending a high-priority transaction. From state $q_1$ party $A$ can gain access to the resource and go to $q_2$ from which they may release it and return to $q_1$. As long as $A$ is holding the resource (in state $q_2$), $B$ may request it, upon which clock $t$ is reset, to be used by the persistent obligation in $q_3$ imposing when the resource is to be released. At this point $A$ may release the resource and return to $q_1$, or start a high-priority transaction and transition to $q_4$, wherein they are (ephemerally) prohibited from releasing the resource or for $B$ to request it. this prohibition ends when the transaction ends and the automaton transitions to $q_5$. From $q_5$, $A$ may start another transaction (and move to $q_4$), or release the resource (and move to $q_1$). 

Due to the lack of space, we do not show the ephemeral flattening. Note that the flattening, after we prune unsatisfiable transitions, will only affect state $q_4$ and $q_5$. In fact, it will add to state $q_4$ the persistent obligation in $q_3$, and it will also add it to state $q_5$. 
The only potential for conflicts is then the version of $q_4$ in the flattening, which contains both the obligation (within 15 minutes) and the prohibition from releasing the resource. Through the procedure described in Section~\ref{s:ca} we can easily detect this possible conflict (note the timing conditions of both norms are satisfiable together).

Detecting this potential conflict allows party A to \emph{a priori} adjust their behaviour accordingly, e.g., by ensuring they finish the transaction in time to still be able to satisfy the obligation, and then release the resource. If we took a weaker view of conflicts, e.g., an automaton has a conflict if there is an unavoidable conflicting configuration, we would not be able to identify such states and give these kinds of useful information to the user. Adding a prohibition on party $A$ from starting a high-priority transaction while a request is in place (i.e. in state $q_3$) would resolve this conflict.

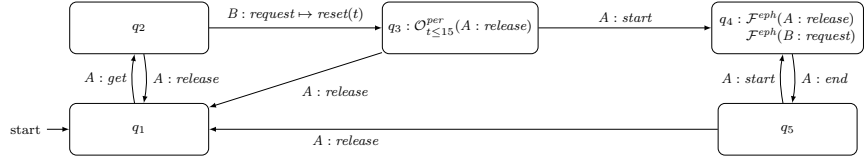
\begin{figure}
\begin{center}
\scalebox{0.65}{\makebox{
\begin{tikzpicture}[
  >=latex,
  node distance=1cm,
  auto,
   every state/.style={rectangle,rounded corners,draw,font=\footnotesize,minimum height=30pt, minimum width=80pt},
   every node/.style={font=\footnotesize},
  bend angle=90,
  ] 
\node[initial,state] (A) 
   {$q_1$}; 
\node[state] (B) [above=of A] 
   {$q_2$}; 
\node[state] (a) [right=35mm of B] 
   {$q_3: \timednormper{O}{\textit{A}}{t\leq 15}{\textit{release}}$}; 
\node[state] (b) [right=35mm of a] 
   {$\begin{array}{ll}
        q_4:& \timednormeph{F}{\textit{A}}{}{\textit{release}}\\
            & \timednormeph{F}{\textit{B}}{}{\textit{request}}
      \end{array}
   $}; 
\node[state] (bb) [below=of b] 
   {$q_5$}; 

\path[->] 
    (A) edge [bend left = 10] node {$\textit{A}: \textit{get}$} (B)
    (B) edge [bend left = 10] node {$\textit{A}: \textit{release}$} (A)
    (B) edge [] node {$\textit{B}: \textit{request}\mapsto \textit{reset}(t)$} (a)
    (a) edge [] node {$\textit{A}: \textit{release}$\qquad\ } (A)
    (a) edge [] node {$\textit{A}: \textit{start}$} (b)
    (b) edge [bend left = 10] node {$\textit{A}: \textit{end}$} (bb)
    (bb) edge [bend left = 10] node {$\textit{A}: \textit{start}$} (b)
    (bb) edge [] node {$\textit{A}: \textit{release}\qquad\qquad\qquad\qquad\qquad\qquad\qquad\qquad$} (A)
    ;

\end{tikzpicture}
}}
\end{center}
    \caption{Part of automaton corresponding to the case study constraints.}
    \label{fig:aut}
\end{figure}

\section{Conclusions}\label{s:conclusions}
In this paper, we have presented an algorithm for the discovery of conflicts in clock reset-free timed contract automata, together with a proof of its correctness. The algorithm is sound, but not complete. Real-time logics push various analyses beyond the reach of the computable and we intend to explore further timed contract automata, investigating both algorithms to perform certain analyses, and proving impossibility results. 
 
\bibliographystyle{plain}
\bibliography{refs}

\appendix

\section{Proofs}

\setcounter{lemma}{0}
\setcounter{theorem}{0}

\begin{lemma}
    Given $N_1, N_2 \in \activekw_{\alpha}(N, p:a)$, if $N_1 \neq N_2$ then $\Tau(p:a, N \setminus N_1, N_1) \wedge \Tau(p:a, N \setminus N_2, N_2)$ is false.
\end{lemma}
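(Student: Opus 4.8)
The goal is to show that the two timing conditions $\Tau(p:a, N \setminus N_1, N_1)$ and $\Tau(p:a, N \setminus N_2, N_2)$ cannot be simultaneously satisfied when $N_1 \neq N_2$. Recall that $\Tau(p:a, N_{\textit{sat}}, N_{\neg\textit{sat}})$ expands to $\tc(N_{\textit{sat}}) \wedge \neg \bigvee \tc(\mathcal{P}(N_{\neg\textit{sat}}) \cup \mathcal{F}(N_{\neg\textit{sat}}) \cup \{O_\tau(p:a) \in N_{\neg\textit{sat}}\})$. The plan is to pick a norm that witnesses the difference between $N_1$ and $N_2$ and track where its $\tc$-condition ends up in each of the two formulas, deriving that one formula asserts $\tc(n)$ while the other asserts $\neg \tc(n)$.

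\textbf{Key steps.} First, since $N_1 \neq N_2$ and both are subsets of $N$, without loss of generality there is a norm $n \in N_1 \setminus N_2$. By the definition of $\activekw_\alpha(N, p:a)$, every obligation in $N$ that is \emph{not} over $p:a$ must belong to both $N_1$ and $N_2$; hence $n$ cannot be such an obligation — so $n$ is either a permission, a prohibition, or an obligation over $p:a$ itself. In all three of these cases, $\tc(n)$ appears as a conjunct of $\tc(N_1)$ (since $n \in N_1$), so $\Tau(p:a, N \setminus N_1, N_1)$ entails $\tc(n)(v)$. On the other side, since $n \notin N_2$, we have $n \in N \setminus N_2$, and because $n$ is a permission, a prohibition, or an obligation over $p:a$, it lies in $\mathcal{P}(N \setminus N_2) \cup \mathcal{F}(N \setminus N_2) \cup \{O_\tau(p:a) \in N \setminus N_2\}$ — exactly the set whose disjunction is negated in $\Tau(p:a, N \setminus N_2, N_2)$. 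Hence $\Tau(p:a, N \setminus N_2, N_2)$ entails $\neg \tc(n)(v)$. The conjunction of the two conditions therefore entails $\tc(n)(v) \wedge \neg\tc(n)(v)$, which is false. The case $n \in N_2 \setminus N_1$ is symmetric.

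\textbf{Main obstacle.} The only delicate point is the asymmetric treatment of obligations in the definition of $\Tau$: obligations \emph{not} over $p:a$ are deliberately dropped from the negated disjunction. One must check that this exclusion is harmless here, which it is precisely because $\activekw_\alpha$ forces all such obligations into every member of $\activekw_\alpha(N,p:a)$, so they never contribute to the symmetric difference $N_1 \bigtriangleup N_2$ — the witness $n$ is always of a type whose $\tc$-condition \emph{is} tracked on both sides. Establishing this membership fact cleanly, and confirming that $\tc(n)$ really is a syntactic conjunct of $\tc(N_1)$ whenever $n\in N_1$ (immediate from the overloading $\tc(\Ns) = \{\tc(N)\mid N\in\Ns\}$ read conjunctively), is all that is required; the rest is propositional.
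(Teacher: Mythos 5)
Your overall strategy is exactly the paper's: pick a witness $n$ in the symmetric difference, use the definition of $\activekw_\alpha$ to rule out $n$ being an obligation not over $p{:}a$, and observe that $\tc(n)$ then occurs with opposite polarity in the two formulas. However, your instantiation of the definition of $\Tau$ is swapped, and the two intermediate claims are false as written. In $\Tau(p{:}a, N_{\textit{sat}}, N_{\neg\textit{sat}})$ the \emph{first} set argument feeds the positive conjunct $\tc(N_{\textit{sat}})$ and the \emph{second} feeds the negated disjunction. So in $\Tau(p{:}a, N\setminus N_1, N_1)$ the positive part is $\tc(N\setminus N_1)$, not $\tc(N_1)$; since $n\in N_1$ sits in the $N_{\neg\textit{sat}}$ slot, this formula entails $\neg\tc(n)(v)$, not $\tc(n)(v)$ as you claim. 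Symmetrically, in $\Tau(p{:}a, N\setminus N_2, N_2)$ the negated disjunction ranges over $\mathcal{P}(N_2)\cup\mathcal{F}(N_2)\cup\{O_\tau(p{:}a)\in N_2\}$, not over the corresponding subsets of $N\setminus N_2$; since $n\in N\setminus N_2$ sits in the $N_{\textit{sat}}$ slot, this formula entails $\tc(n)(v)$. Your two errors cancel, so you still land on the contradiction $\tc(n)(v)\wedge\neg\tc(n)(v)$, but the justifications you give for each conjunct do not follow from the definition.

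Once the polarities are corrected, the proof is the paper's proof verbatim, including the observation that $\activekw_\alpha$ forces every obligation not over $p{:}a$ into both $N_1$ and $N_2$, so the witness is always a permission, a prohibition, or an obligation over $p{:}a$, whose $\tc$ is tracked in the negated disjunction. Fix the two sentences identifying which formula yields $\tc(n)$ and which yields $\neg\tc(n)$, and the argument is complete.
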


\begin{proof}
    Suppose $N_1 \neq N_2$ then, without loss of generality, assume there is a norm $n \in N_1$ but $n \not\in N_2$. 
    
    If $n$ is a permission or prohibition then there is a conjunct in $\Tau(p:a, N \setminus N_1, N_1)$ corresponding $\neg \tc(n)$, while there is a conjunct in $\Tau(p:a, N \setminus N_2, N_2)$ corresponding to $tc(n)$, ensuring the contradiction.

    Note that the definition of $\activekw_\alpha$ ensures both $N_1$ and $N_2$ contain all obligations not over $p:a$, ensuring no such obligation is in their difference. If it is over $p:a$ the same argument applies. 
\end{proof}

\begin{lemma}
    $sat(n, (p:a, v))$ implies $tc(n)(v)$
\end{lemma}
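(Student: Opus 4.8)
The plan is to prove the implication by a straightforward case analysis on the syntactic form of the atomic norm $n$, which must be one of $\timednorm{O}{p'}{\tau}{a'}$, $\timednorm{P}{p'}{\tau}{a'}$, or $\timednorm{F}{p'}{\tau}{a'}$ for some party $p'$, action $a'$ and clock predicate $\tau$. In each case I will unfold the definition of $\satisfied{n}{(p:a,v)}$ on the hypothesis side and the definition of $\tc(n)$ on the conclusion side, and observe that the two predicates either coincide outright or that the former directly entails the latter.

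For the permission case $n = \timednorm{P}{p'}{\tau}{a'}$, the definition of satisfaction gives $\satisfied{n}{(p:a,v)} \df \exceeds{v}{\tau}$, i.e.\ no non-negative advance of time makes $\tau$ hold, which is precisely the predicate $\lambda v.\,\exceeds{v}{\tau}$ defining $\tc(n)$. Hence $\satisfied{n}{(p:a,v)}$ and $\tc(n)(v)$ are literally the same proposition, and the implication is immediate. The prohibition case $n = \timednorm{F}{p'}{\tau}{a'}$ is identical, since both $sat$ and $\tc$ are defined there exactly as for permissions.

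For the obligation case $n = \timednorm{O}{p'}{\tau}{a'}$, satisfaction is only witnessed when the observed action matches the action the norm ranges over; concretely $\satisfied{\timednorm{O}{p'}{\tau}{a'}}{(p:a,v)}$ holds only if $(p,a)=(p',a')$, in which case it equals $\tau(v)$. Since $\tc(n) \df \tau$, having $\satisfied{n}{(p:a,v)}$ true yields $\tau(v)$, i.e.\ $\tc(n)(v)$; and if $(p,a)\neq(p',a')$ the hypothesis is false and the implication holds vacuously.

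There is essentially no hard step here: the lemma is a definitional unfolding, and the paper's remark that it ``follows easily by case analysis'' is accurate. The only point requiring a little care is the obligation case, where one must note that $sat$ for obligations is action-sensitive (unlike permissions and prohibitions, which are discharged purely by the passage of time past $\max(\tau)$), so that the matching of $(p,a)$ with the norm's $(p',a')$ must be made explicit; this mirrors exactly the asymmetric treatment of obligations already built into $\activekw_\alpha$ and $\Tau$ elsewhere in the construction.
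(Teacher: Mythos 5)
Your proof is correct and follows essentially the same route as the paper's: a case analysis on the form of $n$, unfolding the definitions of $\satisfied{n}{(p:a,v)}$ and $\tc(n)$ and observing they coincide for permissions and prohibitions, while the obligation case reduces to $\tau(v)$ with the non-matching-action subcase holding vacuously (the paper's ``for the other cases, \textit{sat} is always false''). No gaps.
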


\begin{proof}
    We proceed by case analysis:
    When $n = \timednorm{O}{p}{\tau}{a}$, the LHS being true implies that $\tau(v)$, while $tc(n) = \tau$, ensuring the result.
     When $n = \timednorm{P}{p'}{\tau}{a'}$ or $n = \timednorm{F}{p'}{\tau}{a'}$, the LHS being true implies that $v > \max(\tau)$, which is equal to $\tc(n)(v)$.
    For the other cases, $\textit{sat}$ is always false by its definition. 
\end{proof}

\begin{lemma}
    $M^+$ is deterministic.
\end{lemma}

\begin{proof}
    Choose a state $(q, E, P)$. 
    
    We claim all explicit transitions from this state are mutually exclusive: note this follows from Lemma~\ref{lem:mutexcl}, given we are conjuncting the $\Tau$ condition with each guess of next active norms.

    Moreover, all transitions generated from previously implicit transitions are mutually exclusive to the explicit transitions, given $\tau'$ collects and negates the disjunction of the original timing constraints of the original explicit transitions. By Lemma~\ref{lem:mutexcl} they are also mutually exclusive with each other.

\end{proof}

\begin{lemma}
    $(q, v, P, E) \xrightarrow[norm]{p:a, t} (q, v, P', E')$ implies $((q, E, P), v, \emptyset, E \cup P) \xrightarrow[norm]{p:a, t} ((q, E, P), v, \emptyset, E' \cup P')$
\end{lemma}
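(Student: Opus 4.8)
The plan is to observe that the hypothesis can only have been produced by the second (non-violating) deontic rule, and then to replay the very same rule on the translated configuration; the only substantive ingredient is that $\activenorms$ distributes over union.

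First I would unpack the assumption $(q, v, P, E) \xrightarrow[norm]{p:a, t} (q, v, P', E')$. Since the right-hand side is not $\bot$, the derivation must use the second deontic rule, so writing $\delta = t - v(\gamma)$ we obtain three facts: (a) $\lnot\exists n \in P \cup E \cdot \violated{n}{(p:a, \shift{v}{\delta})}$; (b) $P' = \activenorms(P, (p:a, \shift{v}{\delta}))$; and (c) $E' = \activenorms(E, (p:a, \shift{v}{\delta}))$. Next I would apply the same second deontic rule to the configuration $((q, E, P), v, \emptyset, E \cup P)$: its violation precondition reads $\lnot\exists n \in \emptyset \cup (E \cup P) \cdot \violated{n}{(p:a, \shift{v}{\delta})}$, which is literally fact (a) because $\emptyset \cup (E \cup P) = P \cup E$. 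Hence the rule fires, and delivers the configuration $((q, E, P), v, \activenorms(\emptyset, (p:a, \shift{v}{\delta})), \activenorms(E \cup P, (p:a, \shift{v}{\delta})))$.

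It remains to simplify this target. From the definition of $\activenorms$ as ``drop the satisfied norms'' we have $\activenorms(N, e) = \{\, n \in N \mid \lnot \satisfied{n}{e} \,\}$, so $\activenorms(\emptyset, \cdot) = \emptyset$, matching the empty persistent component; and being a comprehension filter it distributes over union, giving $\activenorms(E \cup P, (p:a, \shift{v}{\delta})) = \activenorms(E, (p:a, \shift{v}{\delta})) \cup \activenorms(P, (p:a, \shift{v}{\delta})) = E' \cup P'$ by (b) and (c). Thus the reached configuration is exactly $((q, E, P), v, \emptyset, E' \cup P')$, which is what was to be shown.

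There is no real obstacle here: the entire content is the identity $\activenorms(\emptyset,\cdot)=\emptyset$ together with the distributivity of $\activenorms$ over $\cup$, both immediate from its set-comprehension form, plus the trivial observation that $\emptyset \cup (E \cup P)$ and $P \cup E$ denote the same set so that the ``no violation'' premise transfers verbatim. The one point to state carefully is this last set equality, since it is what lets the non-violating rule be invoked on the right with no extra side conditions.
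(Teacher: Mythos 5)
Your proof is correct and follows essentially the same route as the paper's: invoke the non-violation premise to fire the second deontic rule on the translated configuration, then use distributivity of $\activenorms$ over union to identify the resulting norm set with $E' \cup P'$. You are somewhat more explicit about the bookkeeping (the identity $\activenorms(\emptyset,\cdot)=\emptyset$ and the set equality $\emptyset \cup (E \cup P) = P \cup E$), but the substance is identical.
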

\begin{proof}    
    By the definition of the norm semantics, the LHS implies $\neg\exists n \in P \cup E \cdot vio(n, (p:a, \shift{v}{\delta}))$, which allows us to conclude that $((q, E, P), v, \emptyset, E \cup P) \xrightarrow[norm]{p:a, t} ((q, E, P), v, \emptyset, active(E \cup P, (p:a, v))$.

    Note $active(E \cup P, (p:a, v))$ is equal to $active(E, (p:a, v)) \cup active(P, (p:a, v))$, given the distributivity of $active$. This is exactly $E' \cup P'$, given by definition of the norm semantics (for the LHS) $P' = active(P, (p:a, v))$ and $E' = active(E, (p:a, v))$.
\end{proof}

\begin{theorem}
    For all timed traces $ts$, there are sets of timed norms P and E such that $\textit{conf}_0 \xRightarrow{ts}_{M} (q, v, P, E)$ iff $\textit{conf}^+_0 \xRightarrow{ts}_{M^+} ((q, E, P), v, \emptyset, E \cup P)$.
\end{theorem}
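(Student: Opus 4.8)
The plan is to prove the biconditional by induction on the length of the timed trace $ts$. For the base case, $ts$ is empty, and we need $\textit{conf}_0 = (q_0, v_0, \pers(q_0), \eph(q_0))$ to correspond to $\textit{conf}^+_0 = (q^+_0, v_0, \emptyset, \eph^+(q^+_0))$ with $q^+_0 = (q_0, \eph(q_0), \pers(q_0))$ and $\eph^+(q^+_0) = \eph(q_0) \cup \pers(q_0)$; matching $P = \pers(q_0)$ and $E = \eph(q_0)$ this holds by definition. For the inductive step, suppose $ts = ts' \cdot (p{:}a, t)$ and the statement holds for $ts'$, so $\textit{conf}_0 \xRightarrow{ts'}_M (q, v, P, E)$ iff $\textit{conf}^+_0 \xRightarrow{ts'}_{M^+} ((q,E,P), v, \emptyset, E\cup P)$. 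We must show the single-step correspondence: from the paired configurations $(q,v,P,E)$ and $((q,E,P),v,\emptyset,E\cup P)$, consuming $(p{:}a,t)$ leads to paired configurations of the same shape. Since each full step $\xRightarrow{p:a,t}$ factors as a temporal step followed by a deontic step, I would handle the two sub-steps separately, then compose.

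\textbf{Temporal sub-step.} Here the key case split is whether an explicit transition $q \transition{p:a}{\tau}{\rho} q'$ with $\tau(\shift{v}{\delta}) $ fires (where $\delta = t - v(\gamma)$), or the do-nothing rule applies. If an explicit transition fires in $M$, yielding $(q', (\shift{v}{\delta})\oplus\rho, P\cup\pers(q'), \eph(q'))$, I need to exhibit a matching transition in $M^+$ from $(q,E,P)$. The construction's explicit-transition rule gives, for each $P' \in \activekw_\alpha(P, p{:}a)$, a transition with guard $\tau \wedge \Tau(p{:}a, P\setminus P', P')$ to $(q', \eph(q'), P' \cup \pers(q'))$. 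The point is that the \emph{subsequent} deontic step will select exactly the $P'$ that matches $\textit{active}(P\cup\pers(q'),\dots)$ restricted appropriately — but actually I realize the cleaner route is: in $M^+$ the norms are carried as ephemeral, so the temporal step in $M^+$ just needs to land in \emph{some} reachable induced state, and determinism (the already-proved lemma) plus the guard satisfiability (Lemma~\ref{lem:tcsat} guarantees the relevant $\Tau$ conjunct is satisfied for the $P'$ that the concrete semantics picks) pins it down. For the do-nothing case in $M$, I invoke the implicit-transition rule of the construction, whose guard $\tau' \wedge \Tau(\dots)$ is designed to hold exactly when no explicit transition is enabled and the norm bookkeeping matches; here I must also check the $\tau_{to}$ conjunct is vacuous since the paper explicitly drops timeouts (footnote).

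\textbf{Deontic sub-step and composition.} The deontic sub-step is where Lemmas~\ref{lem:viol} and~\ref{lem:sat} do the work directly: violation in $M$ corresponds to violation in $M^+$ (the $\bot$ case), and in the non-violating case the active-norm updates correspond via distributivity of $\textit{active}$. Composing, I get that after consuming $(p{:}a,t)$ the $M$-configuration $(q'', v'', P'', E'')$ corresponds to the $M^+$-configuration $((q'', E'', P''), v'', \emptyset, E''\cup P'')$, where the state component $q''$ and the valuation $v''$ agree because $M^+$ uses the identical transition relation structure (same guards-up-to-refinement, same resets $\rho$, same $\rho_{id}$) — this needs the observation that refining a guard by an always-satisfiable-at-the-right-moment $\Tau$ conjunct does not change which transition fires, again leaning on determinism and on the fact that for the \emph{true} runtime valuation the correct $\Tau$ branch is satisfied (Lemma~\ref{lem:tcsat}) while all others are excluded (Lemma~\ref{lem:mutexcl}). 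For the reverse direction of the biconditional, the same case analysis runs backwards: every $M^+$ transition projects (forgetting the norm-set annotations in the state and the $\Tau$ conjuncts in the guard) to a legitimate $M$ transition or do-nothing step, and the induced-state annotations are forced to be the concrete active-norm sets by the argument above.

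\textbf{Main obstacle.} I expect the real difficulty to be the temporal sub-step in the direction $M \Rightarrow M^+$: one must argue that the particular $P' \in \activekw_\alpha(P, p{:}a)$ whose guard $\Tau(p{:}a, P\setminus P', P')$ is satisfied at the runtime valuation $\shift{v}{\delta}$ is precisely the set of persistent norms that survive into the next configuration, i.e. that $\activekw_\alpha$'s overapproximation collapses to the right value once the concrete timing is fixed. This requires combining Lemma~\ref{lem:correctactive} (the concrete active set is among the guesses), Lemma~\ref{lem:tcsat} (so the guessed-correct branch's guard is indeed enabled), and Lemma~\ref{lem:mutexcl} together with determinism (so no competing branch is enabled), and then checking that the \emph{deontic} step that follows in $M^+$ — which operates on $E\cup P$ as a flat ephemeral set — reproduces $E'\cup P'$ exactly. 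Threading the temporal and deontic bookkeeping together so that the state annotation $(q',\eph(q'),P'\cup\pers(q'))$ created by the temporal rule matches what the deontic rule then trims is the fiddly part; everything else is routine unwinding of definitions and the cited lemmas.
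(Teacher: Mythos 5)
Your proposal is correct and follows essentially the same route as the paper's own proof: induction on the trace length, a case split into violation, explicit-transition, and implicit-transition steps, and the same combination of Lemmas~\ref{lem:viol}, \ref{lem:sat}, \ref{lem:correctactive}, \ref{lem:tcsat} and~\ref{lem:mutexcl} to show that the unique enabled $\Tau$-branch in $M^+$ selects exactly the concrete active set $P' = \textit{active}(P,(p{:}a,\shift{v}{\delta}))$. The ``main obstacle'' you identify is precisely the step the paper's proof spends its effort on, so no further changes are needed.
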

\begin{proof}    
    We show this by induction on the length of $ts$. For the base case, consider that the initial configuration of $M$ is $(q_0, v_0, \pers(q_0), \eph(q_0))$. By definition of $M^+$ its initial configuration is $((q_0, \eph(q_0), \pers(q_0)), v_0, \emptyset, \eph(q_0) \cup \pers(q_0))$, thus the result follows. The inductive step remains.
    
    Consider a configuration $(q, v, P, E)$ of $M$ and the corresponding configuration (given by the inductive hypothesis) in $M^+$, $((q, E, P), v, \emptyset, E \cup P)$. Consider also a timed action $(p : a, t)$. It remains to show that after this step the theorem is maintained. There are three cases, either an explicit, or implicit transition activates, or there is a violation.

    Case 1: The case of a violation is Lemma~\ref{lem:viol}.

    Case 2: Assume the timed action $(p:a, v)$ occurs, and there is a corresponding transition triggered in $M$, $q \xrightarrow{p : a \mid \tau \mapsto \rho} q'$.

    The induced deontic semantics transition in $M$ is then $(q, v, E, P) \xrightarrow[norm]{p:a, t} (q, v, active(P, (p:a, \shift{v}{\delta})), active(E, (p:a, \shift{v}{\delta})))$. 

    The induced timed configuration in $M$ is then $(q, v, active(P, (p:a, \shift{v}{\delta})), active(E, (p:a, \shift{v}{\delta}))) \xrightarrow[temp]{p:a, t} (q', (\shift{v}{\delta}) \oplus \rho, active(P, (p:a,\shift{v}{\delta})) \cup \pers(q'), \eph(q'))$.

    The induced deontic semantics transition in $M^+$ is then $((q, E, P), v, \emptyset, E \cup P) \xrightarrow[norm]{p:a, t} (q, v, \emptyset, active(E \cup P, (p:a, \shift{v}{\delta})))$, by Lemma~\ref{lem:sat}.

    By definition of $M^+$, for each $P' \in \activekw_\alpha(P, p:a)$, it has a transition: $(q, E, P) \xrightarrow{p:a | \tau \wedge \Tau(p:a, P \setminus P', P') \mapsto \rho} (q', eph(q'), P' \cup pers(q'))$. 

    By Lemma~\ref{lem:correctactive}, there is such a transition for $active(P, (p:a, \shift{v}{\delta}))$ -- onwards we set $P' = active(P, (p:a, \shift{v}{\delta}))$. What remains is to show that the timing condition $\Tau(p:a, P \setminus P', P')$ holds true at the time point $t$ that $p:a$ occurs at.

    $\Tau(p:a, P \setminus P', P')$ is made of two conjuncts: $\tc(P \setminus P')$ and $\neg \bigvee \tc(\mathcal{P}(P \setminus P') \cup \mathcal{F}(P') \cup \{O_\tau(p:a) \in N_{P'}\})$.

    Recall that by definition of $active(P, (p:a, \shift{v}{\delta}))$, $sat$ holds true for each norm in $P \setminus active(P, (p:a, \shift{v}{\delta}))$ at $v$. Since $\tc$ mirrors the definition of $sat$, then $\tc(P \setminus active(P, (p:a, \shift{v}{\delta})))(\shift{v}{\delta})$ holds true.
    
    Dually, $sat$ is not true on any norm in $active(P, (p:a, \shift{v}{\delta}))$ at $\shift{v}{\delta}$, ensuring that the second conjunct also holds true at $\shift{v}{\delta}$ (by Lemma~\ref{lem:tcsat}).
    
    Case 3: Note, that $\tau'$ in the implicit transitions rule identifies the case that there is no explicit transition for $p:a$. The rest of the proof follows similarly to the previous case. 
\end{proof}

\end{document}